\def\bra#1{\mathinner{\langle{#1}|}}
\def\ket#1{\mathinner{|{#1}\rangle}}
\def\Bra#1{\left<#1\right|}
\def\inp#1#2{\Bra{#1}#2\rangle}
\newcommand{\pj}[1]{\ket{#1}\bra{#1}}
\newcommand{\inpr}[3]{\mathinner{\langle #1|#2|#3\rangle}} 
\newcommand{\beq}{\begin{equation}}
\newcommand{\eeq}{\end{equation}}
\newcommand {\cali}[1]{{\mathcal #1}}
\newcommand{\tr}{{\tt Tr}} 
\newcommand{\conj}[1]{\overline{#1}}
\newcommand{\pmat}{\begin{pmatrix}} 
\newcommand{\emat}{\end{pmatrix}} 
\newcommand{\complex}{\mathbb{C}}
\newcommand{\real}{\mathbb{R}}
\newcommand{\tensor}{\otimes}
\renewcommand{\Re}{\operatorname{Re}}
\renewcommand{\Im}{\operatorname{Im}}
\newcommand{\rnk}{\operatorname{rank}}
{}{}
\newtheorem{thm}{Theorem}{}{}
\newtheorem{lem}{Lemma}{}{}
\newtheorem{cor}{Corollary}{}{}
{}{}
\newcommand{\rank}{\operatorname{rank}}
\newcommand{\commentout}[1]{}
\newcommand{\be}{\begin{enumerate}}
\newcommand{\ee}{\end{enumerate}}
\newcommand{\bi}{\begin{itemize}}
\newcommand{\ei}{\end{itemize}}
\newcommand{\bpmat}{\begin{pmatrix}}
\newcommand{\epmat}{\end{pmatrix}}
\newcommand{\F}{\mathscr{F}} 
\newcommand{\G}{\mathscr{G}} 
\newcommand{\B}{\cali{B}}
\newcommand{\la}{\langle} 
\newcommand{\ra}{\rangle} 
\newcommand{\assign}[1]{\la\B_{#1},S_{#1}\ra}
\begin{document}
\title{Consistent assignment of quantum probabilities} 
\author{Manas K. Patra\\Department of Computer Science, University of York\\York YO10 5GH, UK \\and \\Laboratoire d'Information Quantique,
Universite Libre de Bruxelles\\ Campus Plaine, Bruxelles 1050, Belgium
 \and Ron van~der Meyden\\School of Computer Science \& Engineering, University of New South Wales\\ Sydney 2052, Australia} 
\date{}
\maketitle
\begin{abstract}
We pose and solve a problem concerning consistent assignment of quantum probabilities to a set of bases associated with maximal projective measurements. We show that our solution is optimal. We also consider some consequences of the main theorem in the paper in conjunction with Gleason's theorem. Some potential applications to state tomography and probabilistic quantum secret-sharing scheme are discussed. 
\end{abstract}
\section{Assignment of quantum probabilities} \label{sec:defnCons}
Consider a finite quantum system $\mathscr{S}$ of dimension $n$. 
Define a {\em quantum probability assignment} (QPA) for $\mathscr{S}$ to be 
a set $\F = \{ \la \B_1, S_1\ra, \ldots, \la \B_m, S_m\ra\}$ 
consisting of $m$ pairs $\la \B_k, S_k\ra$ where 
\be 
\item 
each $\B_k = \la P^k_1, \ldots, P^k_n\ra$ is a sequence of $n$ 1-dimensional projection operators $P^k_i = \pj{\alpha^k_i}$
corresponding to an orthonormal basis $\ket{\alpha^k_1}, \ldots , \ket{\alpha^k_n}$, and 
\item
each $S_k = \la p^k_1, p^k_2, \dotsc, p^k_n \ra$ is a sequence of $n$ real numbers such that  
$p^k_i\geq 0$  and $ \sum_i p^k_i= 1$. 
\ee
Clearly, the sequences $S_k$ can be considered as distributions on a probability space with $n$ elements (atomic events).

Suppose now it is claimed that these probabilities correspond to projective measurements in these $m$ bases. 
Thus, the claim is that there is a (mixed) quantum state $\rho$ (a positive semidefinite matrix with trace 1) such that
\beq \label{eq:basicForm2}
\tr(\rho P^k_i) = p^k_i \text{ for all } 1\leq k \leq m, 1\leq i \leq n
\eeq
If such a state exists we say that the 
collection $\F$ is a {\em consistent quantum probability assignment}. 
How do we verify this claim? Potentially $m$ could be infinite. A related question is the following \\

\noindent
{\bf Problem.} What is the {\em smallest} number 
$r$ such that if all the probability assignments $\G\subseteq \F$ with 
$|\G| \leq r$ ($|\G|$ denotes cardinality of $\G$) are consistent, then the whole collection $\F$ is consistent?\\

\noindent
The number 
depends on the assignment $\F$. 
So we ask what is the largest value of $r(\F)$ as we vary $\F$. This value depends
only on the dimension $n$. Call it the {\em consistency number} $r_n$. Equivalently, 
$r_n$ is the smallest number such that for every
collection of quantum probability assignments $\F$, if every subset of $\F$ of size $r_n$ is consistent then $\F$ is consistent. 
This  problem first came up in our attempt at formal axiomatization of a logic of 
finite-dimensional quantum systems \cite{MP1}. The problem we address in this paper is the calculation of the consistency number. 

We mention that a related problem was posed by Bell and Kochen and Specker \cite{Bell,KS} in their analysis of non-contextual hidden variable theories. Projection operators of rank 1 are the quantum analogues of classical propositional variables. The latter can take two values, say 0 and 1, corresponding to {\bf true} and {\bf false}. The question then is whether there is a consistent probability assignment to the whole collection such that the $p^k_i$ 
take value 1 or 0. Clearly quantum mechanics prohibits this in case of sets of incompatible projection operators. The question posed by these authors is whether such assignment is possible for an arbitrary collection of bases in some hidden variable theory. 

The paper is structured as follows. 
In Section~\ref{sec:upper} and \ref{sec:optimal}, we prove that $r_2=4$ and 
$r_n=n^2-n+1$ for $n\geq 3$. 
We first prove a weaker result that if every subset consisting of $r_n+1=n^2-n+2$ probability assignments is consistent then so is the 
whole set $\F$. The proof in the case of $r_n$ is surprisingly much harder. There is one exceptional case where one has to investigate in detail the structure of the bases themselves (unlike the case for $r_n+1$). But we get extra dividends in gaining information about the measurement bases. In fact, there is essentially one case where we need the maximal value $n^2-n+1$. We use this in Section \ref{sec:optimal} to construct examples to demonstrate that the number $r_n$ is indeed optimal. 
More precisely, if $r'<r_n$ there exist quantum probability assignments $\F$  such that all 
subcollections of $\F$ of size $r'$ are consistent but $\F$ itself is not consistent. 
The tricky part in the construction of these examples is that we have to ensure that positive definite matrices with trace 1 (states) exist which satisfy the assignments corresponding to {\em every} subcollection of size $r'$. We conclude the section  with a simple application  of the main result in conjunction with Gleason's famous theorem
\cite{Gleason}. In Section \ref{sec:app} we outline some  applications of our constructions to (probabilistic) secret sharing and state tomography.  
Section~\ref{sec:concl} makes some concluding remarks.

\section{Solution to the problem} \label{sec:upper} 

This section is devoted to 
establishing  upper bounds for the consistency number. 
We assume first that the number of assignments $m$ is finite, 
but we will show later that this assumption can be dropped. More precisely, we prove: 

\begin{thm}\label{thm:basicCons}
Let $H$ be the Hilbert space of an $n$-dimensional quantum system. Define a quantum probability assignment  $\F = \{ \la \B_1, S_1\ra, \ldots, \la \B_m, S_m\ra\}$ 
consisting of $m$ pairs $\la \B_k, S_k\ra$ where 
\be
\item [{\em i.}]
 $\B_k = \la P^k_1, \ldots, P^k_n\ra$ is a sequence of $n$ projection operators $P^k_i = \pj{\alpha^k_i}$
corresponding to an orthonormal basis $\ket{\alpha^k_1}, \ldots , \ket{\alpha^k_n}$, and 
\item[{\em ii.}]
 $S_k = \la p^k_1, p^k_2, \dotsc, p^k_n \ra$ is a sequence of $n$ real numbers such that  
$p^k_i\geq 0$  and $ \sum_i p^k_i= 1$. 
\ee

Then $\F$ is a consistent quantum probability assignment to an $n$-dimensional quantum system if and only if every subset $\G \subseteq \F$ of size at most 
\beq
R_n
= \begin{cases} 4 & \text{ if } n=2\\
n^2-n+1 & \text{ if } n>2\end{cases}
\eeq
is consistent. 
That is, $r_n \leq R_n$. 
\end{thm}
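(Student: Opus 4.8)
\emph{Proof plan.} The plan is to recast the statement as a Helly-type intersection problem and then improve on the bound that Helly's theorem alone gives. For $1\le k\le m$ put $C_k=\{\rho:\rho\succeq 0,\ \tr\rho=1,\ \tr(\rho P^k_i)=p^k_i\ \text{for all }i\}$. Each $C_k$ is non-empty (it contains the density matrix $\sum_i p^k_i P^k_i$), compact and convex, and $\F$ is consistent precisely when $\bigcap_{k=1}^{m}C_k\ne\emptyset$; the $C_k$ all lie in the real affine space of trace-one Hermitian matrices, of dimension $n^2-1$. Since $\sum_i P^k_i=\unit$, the $n$ equations defining $C_k$ carry a single dependency, so $C_k$ is contained in an affine flat $A_k$ of dimension $n^2-n$ whose direction subspace $W_k$ is the trace-orthogonal complement of $\operatorname{span}\{P^k_1,\dots,P^k_n\}$, i.e.\ of the matrices diagonal in the basis $\B_k$. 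I assume $m$ finite for the moment and remove this at the end.

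First I would prove the weak bound: a \emph{minimal} inconsistent subfamily $C_1,\dots,C_s$ satisfies $s\le n^2-n+2$. Choosing $\rho_i\in\bigcap_{j\ne i}C_j$, a Radon-type argument (exactly as in the textbook proof of Helly's theorem: an affine dependence among the $\rho_i$, split into positive and negative parts, would furnish a point common to every $C_j$) shows that $\rho_1,\dots,\rho_s$ are affinely independent. Then $\rho_2,\dots,\rho_s$ are $s-1$ affinely independent points of $C_1\subseteq A_1$, so $s-2\le\dim A_1=n^2-n$. For $n=2$ this already yields $R_2=4$, and the examples of Section~\ref{sec:optimal} show the value cannot be lowered; so I may assume $n\ge 3$.

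The sharper bound is obtained by induction on $n$, the inductive claim being that the extremal size $s=n^2-n+2$ cannot occur. If it did, the $s-1$ affinely independent points $\{\rho_i:i\ne j\}$ would span $A_j$, giving $A_j=\operatorname{aff}\{\rho_i:i\ne j\}\subseteq E:=\operatorname{aff}\{\rho_1,\dots,\rho_s\}$, hence $W_j\subseteq\vec E$ for every $j$; consequently $\vec E^\perp$, a space of dimension $n^2-(s-1)=n-1$ containing $\unit$, would lie inside $\operatorname{span}\{P^j_1,\dots,P^j_n\}$ for every $j$. Since $n-1\ge 2$, this produces a non-scalar Hermitian matrix $X\in\vec E^\perp$ that is diagonal in all the bases $\B_j$; then every $\ket{\alpha^j_i}$ is an eigenvector of $X$, each $\B_j$ restricts to an orthonormal basis of each eigenspace, and the (at least two) eigenspaces $H=\bigoplus_\lambda H_\lambda$ block-diagonalise every $P^j_i$. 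Pinching a state onto these blocks preserves positivity, trace and every $\tr(\rho P^j_i)$, so the family is consistent iff (a) each block weight $q^j_\lambda:=\sum_{i:\ket{\alpha^j_i}\in H_\lambda}p^j_i$ is independent of $j$ — which must hold here, or else two bases already clash, contradicting minimality — and (b) for each $\lambda$ with common weight $w_\lambda>0$, the family induced on $H_\lambda$ (a space of dimension $<n$), with probabilities rescaled by $1/w_\lambda$, is consistent. Inconsistency of the family thus forces inconsistency of the induced family on some block $H_{\lambda_0}$ with $2\le\dim H_{\lambda_0}\le n-1$ (the one-dimensional blocks are automatically consistent); by the inductive hypothesis that induced family has an inconsistent subfamily of size at most $R_{\dim H_{\lambda_0}}\le R_{n-1}<n^2-n+2=s$, and the corresponding subfamily of our original family is then an inconsistent proper subfamily — contradicting minimality. (In the degenerate case where all eigenspaces of $X$ are one-dimensional, all the $\B_j$ coincide and consistency reduces to equality of the $S_j$, again contradicting minimality.) For infinite $m$: if every subfamily of size $\le R_n$ is consistent then, by the finite case, so is every finite subfamily, and since the $C_k$ are closed subsets of the compact set of density matrices, the finite-intersection property yields $\bigcap_k C_k\ne\emptyset$.

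I expect the main difficulty to be precisely the passage from $n^2-n+2$ to $n^2-n+1$: identifying the common non-scalar diagonal matrix $X$ in the extremal configuration, and verifying that consistency genuinely factors through its eigenspace blocks, is the one point where one must examine the measurement bases themselves rather than abstract convex bodies — and it is exactly this block structure that the optimality examples of Section~\ref{sec:optimal} are built to exploit.
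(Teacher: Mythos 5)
Your proposal is correct, but it reaches the bound by a genuinely different route than the paper. The paper works throughout with ranks of the real linear systems: Lemma~\ref{lem:Consist_simple} obtains the weak bound $n^2-n+2$ by a chain-of-ranks argument, and the improvement to $n^2-n+1$ passes through the structural Lemma~\ref{lem:mainThm} (two bases whose combined system has rank $n+1$ share $n-2$ vectors, the remaining two pairs spanning a common plane), followed by a count of pairwise ``incremental'' equations that locates three jointly inconsistent constraints on a single off-diagonal entry inside some $R_n$-subset. You instead phrase the whole problem in Helly form: the Radon argument gives affine independence of the witnesses $\rho_i$ of a minimal inconsistent family, hence the weak bound $s\le n^2-n+2$; and in the extremal case a pure dimension count ($A_j=\operatorname{aff}\{\rho_i:i\ne j\}$ for every $j$, so every $W_j$ lies in $\vec E$ and $\dim \vec E^{\perp}=n-1\ge 2$) produces a non-scalar Hermitian $X$ simultaneously diagonal in all the bases, whose eigenspace blocks reduce the problem to strictly smaller dimension via pinching, closing an induction on $n$. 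I checked the delicate steps --- equality of the block weights forced by pairwise consistency, the automatic consistency of one-dimensional blocks, and the lifting of an inconsistent induced subfamily of size at most $R_d<s$ to an inconsistent proper subfamily --- and they all go through; the induction correctly never touches $n=2$, where the extremal configuration genuinely occurs. What each approach buys: yours handles positivity transparently (it is built into the convex sets $C_k$ from the outset) and replaces the paper's case analysis with one conceptual induction; the paper's route extracts the explicit pairwise geometry of Lemma~\ref{lem:mainThm}, which it then reuses to engineer the optimality examples of Section~\ref{sec:optimal}. Your compactness argument for infinite $m$ also matches the role of the paper's Corollary, replacing Tukey's lemma by the finite intersection property.
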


Note that arbitrary $k$-subsets $\F$ may be separately consistent. However, the proof below will produce a state in which the whole of $\F$ is a consistent assignment provided the conditions of the theorem are satisfied. Observe also that the problem of assignment arises because of the unitary transformation connecting two bases. 
We will later see that these bounds are optimal in the sense that for a positive integer $r'< R_n$ 
there are bases and corresponding probability assignments such that every subset of size $r'$ is consistent but the whole set is inconsistent. 

We will start with a relatively simpler result. 

\begin{lem}\label{lem:Consist_simple}
With the notation as above,  
$\F= \{\la \B_1,S_1\ra, \dotsc, \la \B_m,S_m\ra\}$  is a 
consistent probability assignment to an $n$-dimensional quantum system if and only if every subset $\G \subseteq  \F$
of size at most $n^2-n+2$ is consistent. 
\end{lem}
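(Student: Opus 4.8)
The plan is to recast consistency as a geometric intersection problem and then run a Helly--Radon argument calibrated to the dimensions that actually occur. I would work inside the $n^2$-dimensional real vector space $\mathcal{H}$ of Hermitian $n\times n$ matrices. For each $k$ put
\[
 C_k \;=\; \{\, \rho\in\mathcal{H} \;:\; \tr(\rho P^k_i)=p^k_i \text{ for } i=1,\dots,n \,\},
\]
and let $D\subseteq\mathcal{H}$ be the cone of positive semidefinite matrices. Since the $P^k_i$ are orthonormal in $\mathcal{H}$ (rank-one projections onto an orthonormal basis), the $n$ affine conditions cutting out $C_k$ are linearly independent, and the system is consistent because $\tilde\rho_k:=\sum_i p^k_i P^k_i$ solves it; hence $C_k$ is a nonempty affine flat of dimension exactly $n^2-n$, and every $\rho\in C_k$ automatically has trace $\sum_i p^k_i=1$. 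Consequently $\F$ is consistent if and only if $\bigcap_{k=1}^m(C_k\cap D)\neq\emptyset$. Writing $K_k:=C_k\cap D$, each $K_k$ is \emph{convex}, is nonempty (it contains $\tilde\rho_k$, using $p^k_i\ge 0$), and lies inside the flat $C_k$ of dimension $d:=n^2-n$.

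The heart of the argument is the Helly-type statement: if $K_1,\dots,K_m$ are convex subsets of a real affine space and each $K_k$ lies in some affine flat of dimension at most $d$, then $\bigcap_k K_k\neq\emptyset$ provided every subfamily of size at most $d+2$ has a common point. I would prove this by strong induction on $m$. The cases $m\le d+2$ are immediate. For $m\ge d+3$, the induction hypothesis applied to the $m$ subfamilies of size $m-1$ (which inherit the hypothesis) yields points $x_j\in\bigcap_{k\neq j}K_k$; if some $x_{j_0}\in K_{j_0}$ then $x_{j_0}\in\bigcap_k K_k$ and we are done. Otherwise $x_2,\dots,x_m$ all lie in $K_1$, hence in a flat of dimension $\le d$, and there are $m-1\ge d+2$ of them, so Radon's theorem produces a partition $\{2,\dots,m\}=I\sqcup J$ into nonempty parts and a point $p\in\mathrm{conv}\{x_i:i\in I\}\cap\mathrm{conv}\{x_j:j\in J\}$. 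Because $1\notin I$, for every index $k$ either $k\notin I$, in which case every $x_i$ with $i\in I$ lies in $K_k$, so $p\in K_k$; or $k\in I$, in which case $k\notin J$ and every $x_j$ with $j\in J$ lies in $K_k$, so again $p\in K_k$. Thus $p\in\bigcap_k K_k$.

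Applying this with the flats $C_k$ and $d=n^2-n$ gives the lemma for finite $\F$ (the converse being trivial); for infinite $\F$ the sets $K_k$ are closed subsets of the compact set of density matrices, so the finite case together with the finite-intersection property of compact sets completes the proof. The one genuinely delicate point is the dimension count $\dim C_k=n^2-n$: one must observe that although the $n$ equations $\tr(\rho P^k_i)=p^k_i$ already force $\tr\rho=1$, they remain linearly independent as functionals on all of $\mathcal{H}$, so $C_k$ has codimension exactly $n$. I do not expect the bound $d+2$ to be improvable to $d+1$ at this level of generality (three lines forming a triangle already violate it for convex sets contained in flats), which is precisely why the sharper value $n^2-n+1$ in Theorem~\ref{thm:basicCons} must exploit the special geometry of $D$ and of the $C_k$ --- the harder analysis carried out later.
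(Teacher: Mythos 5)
Your proof is correct, but it takes a genuinely different route from the paper. The paper works directly with the linear system \eqref{eq:basicForm3}: it tracks the rank of the coefficient matrix along a chain of nested subfamilies, observes that starting from rank $n$ (one block) each further block can raise the rank by at most $n-1$ while the total rank is capped at $n^2$, and concludes by a pigeonhole argument that in any subfamily of size $\geq n^2-n+1$ some block is linearly dependent on the others; consistency then propagates by induction, with a separate easy case when the rank is already maximal. You instead recast consistency as nonempty intersection of the convex sets $K_k=C_k\cap D$, note that each $K_k$ sits inside an affine flat of dimension $d=n^2-n$ (your dimension count, via Hilbert--Schmidt orthonormality of the $P^k_i$, is the right justification), and prove the Helly-type statement with Helly number $d+2$ by the standard Radon-partition induction; your case analysis on $k\in I$ versus $k\notin I$ is sound, and the trivial extension of Radon's theorem from exactly $d+2$ points to at least $d+2$ points (enlarge one part of the partition) is the only step you gloss over. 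Each approach buys something: yours handles positive semidefiniteness for free through convexity, and gives the infinite-$m$ case by compactness of the density matrices plus the finite intersection property, where the paper resorts to Tukey's lemma in a corollary. The paper's rank bookkeeping, on the other hand, is precisely the machinery that gets refined (via Lemma~\ref{lem:mainThm} and the incremental equations $\cali{E}_j$) to reach the sharper bound $n^2-n+1$ of Theorem~\ref{thm:basicCons}; as you correctly observe, a Helly argument that sees only the dimensions of the flats cannot do better than $d+2$, since three lines bounding a triangle already realize that bound.
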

\begin{proof}
Let 
$\F= \{\la \B_1,S_1\ra, \dotsc, \la \B_m,S_m\ra\}$ 
where $\B_i$ are bases and $S_i$ the corresponding probability assignments. 
We may assume that $m> n^2-n+2$. 
Let $I$ denote the unit matrix of order $n$.
Now for each $1\leq k\leq m$ we have a set of consistency equations
\beq \label{eq:basicForm3}
\begin{split}
\tr(\rho I)& =\tr(\rho)=1 \\
\tr(\rho P^k_i) & = p^k_i ~\text{for}~ 1\leq i \leq n-1 \\
\end{split}
\eeq
in the {\em real} variables $x_{jk}$ and $y_{jk}$ where $\rho_{jk}=x_{jk}+iy_{jk}$, the $(jk)^{\text{th}}$ entry of the quantum state $\rho$. 
Note that we omit the equation $\tr(\rho P^k_n)  = p^k_n$ for the case $i=n$ since this
follows from others using the facts that $\sum_i P^k_i= I$ and $\sum_ip^k_i=1$. 
The hermiticity of $\rho$ implies that there are $n^2$ independent parameters. Any one set of equations corresponding to a single basis assignment fixes the trace condition. We call the set of equations corresponding to the assignment to the $k$th basis the $k$th block, $B_k$. So we have $n^2$ independent real variables and any block, say the first, has $n$ independent equations. That is, the rank of the coefficient matrix of the first block is $n$. But after the normalization ($\tr(\rho)=1$) is fixed, each block from the second block onwards can contribute at most $n-1$ to the overall rank. Given an arbitrary set ${\cal A}$ of assignments let $\rnk({\cal A})$ denote the rank of the coefficient matrix of the corresponding set of linear equations \eqref{eq:basicForm3}. Since $\rho$ has $n^2$ real variables $\rnk({\cal A})\leq n^2$. Now consider a subset $\G\subset \F$ of cardinality $l\geq n^2-n+2$. Call such subsets $l$-subsets of $\F$. We will show that if every $l$-subset is consistent then any $(l+1)$-subset is consistent. Then induction will complete the proof. Hence, we have to show that given any $(l+1)$-subset $\mathscr{K}= \{\la \B_i,S_i\ra :1\leq i \leq l+1\}$ it is consistent if every $l$-subset of $\mathscr{K}$ is consistent. 
\def\K{\mathscr{K}}

By assumption $\G_i=\K-\{\la\B_i,S_i\ra\}$ is consistent for all $1\leq i\leq l+1$. Now consider the sets $\G_{ij}=\G_i- \{\la\B_j,S_j\ra\}$. Suppose $\rank(\G_{ij})<n^2$. Since $|\G_{ij}| \geq n^2-n+1$ this implies that at least one of the blocks say $B_k$ in $\G_{ij}$ is dependent on the rest in the sense that $\rank(\G_{ij}-\{\la\B_k,S_k\ra\})=\rank(\G_{ij})$.  Suppose this is not the case. Then we must have 
\[ n = \rank(\{\la\B_1,S_1\ra\}) < \rank(\la\B_1,S_1\ra,\la\B_2,S_2\ra\})<\dotsc < \rank(\G_{ij})<n^2\]
But this is impossible since $|\G_{ij}|\geq n^2-n+1$. We have shown that there is some $k$ such that $\la\B_k,S_k\ra\in \G_{ij}$  and $\rank(\G_{ij}-\{\la\B_k,S_k\ra\})=\rank(\G_{ij})$. Let $\G'_{ij}=\G_{ij}-\la\B_k,S_k\ra$. 
Then every equation in the block $B_k$ can be written as a linear combination of equations of $\G'_{ij}$. Hence consistency of $\G_{ij}$ implies that {\em any} solution to $\G'_{ij}$ is a solution for $\G_{ij}$. The cardinality of $\G'_{ij}\bigcup \{\{\la B_i,S_i\ra,\la B_i,S_i\ra\}$ is $l$ and by hypothesis it is consistent. But from what we have shown above a solution $\rho$ to this system is also a solution to $\G'_{ij}\bigcup \{\{\la B_i,S_i\ra,\la B_j,S_j\ra, \la B_k,S_k\ra\}=\K$. The proof is complete in this case. 

Next consider the second alternative, $\rank(\G_{ij})=n^2$. This is the maximal rank since there are $n^2$ variables. Hence, any solution to the system $\G_{ij}$ is unique. This implies that $\G_i$ and $\G_j$ have the same unique solution. But then so does $\K$ and the proof is complete. 
\end{proof}

The following lemma provides one of the key elements in improving the 
bound so as to obtain our main result. 

\begin{lem}\label{lem:mainThm}
Let $\cali{B}=\{\epsilon_1, \dotsc, \epsilon_n\}$ and $\cali{B}'=\{\beta_1, \dotsc, \beta_n\}$ be two bases in an $n$-dimensional Hilbert space. Let $B$ and $B'$ be, respectively, 
the associated matrices of coefficients for the systems 
of real equations 
\beq
\begin{split}
&\tr(\pj{\epsilon_i}\rho) = p_i,~\text{and}~ \tr(\pj{\beta_i}\rho) = q_i, ~(\text{both for}~ i=1,\dotsc, n) \\
\end{split}
\eeq 
where $\sum_ip_i=\sum_iq_i=1$ and 
where $\rho$ is positive semidefinite matrix and the entries of $\rho$ are treated as unknown variables. Suppose the rank of the combined system $B\text{ and }B'$ is $n+1$, exactly 1 higher than the rank of $B$ (or $B'$). Then two of the vectors from $\cali{B}'$, say, $\beta_1$ and $\beta_2$ lie in the plane determined by two vectors from $\cali{B}$. Assuming (without loss of generality) that the later is spanned by $\epsilon_1\text{ and }\epsilon_2$ we also have $\inp{\epsilon_i}{\beta_j}\neq 0\text{ for } i,j\in \{1,2\}$. Further, the set $\{\beta_j:3\leq j\leq n\}$ is a permutation of $\{\epsilon_j:3\leq j\leq n\}$. 
\end{lem}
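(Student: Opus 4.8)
The plan is to translate the rank hypothesis into a statement about the two commutative algebras of operators diagonalized by the given bases, and then read the geometric conclusions off the atomic structure of their intersection. Set $V = \mathrm{span}\{\pj{\epsilon_i} : 1\le i\le n\}$ and $W = \mathrm{span}\{\pj{\beta_j} : 1\le j\le n\}$, the real vector spaces of Hermitian operators diagonal in $\mathcal{B}$, respectively $\mathcal{B}'$. Under the trace pairing on Hermitian matrices, the row space of $B$ corresponds to $V$ and that of $B'$ to $W$, so $\rank B=\dim V=n$, $\rank B'=\dim W=n$, and the rank of the combined system equals $\dim(V+W)$. The hypothesis is thus $\dim(V+W)=n+1$, and the dimension formula gives $\dim(V\cap W)=n-1$.

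Now $V$ and $W$ are unital commutative $*$-subalgebras of $M_n(\comp)$, hence so is $\mathcal{A}:=V\cap W$; being $(n-1)$-dimensional it has exactly $n-1$ atomic (minimal) projections $\Pi_1,\dots,\Pi_{n-1}$, which are mutually orthogonal and sum to $I$, so their ranks are positive integers summing to $n$. Since there are $n-1$ of them, $\sum_k(\rank\Pi_k-1)=1$ forces exactly one atom to have rank $2$ and the remaining $n-2$ to have rank $1$. Every projection in $V$ is a sum of some of the $\pj{\epsilon_i}$ (these being linearly independent idempotents spanning $V$), and likewise every projection in $W$ is a sum of some of the $\pj{\beta_j}$. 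Hence each rank-$1$ atom of $\mathcal{A}$ equals some $\pj{\epsilon_i}$ and simultaneously some $\pj{\beta_j}$; relabelling so that these atoms are indexed by $\{3,\dots,n\}$ on both sides yields a bijection $\sigma$ of $\{3,\dots,n\}$ with $\pj{\epsilon_j}=\pj{\beta_{\sigma(j)}}$, which is exactly the assertion that $\{\beta_3,\dots,\beta_n\}$ is a permutation (of rays) of $\{\epsilon_3,\dots,\epsilon_n\}$.

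For the remaining rank-$2$ atom $\Pi_0$: as an element of $V$ it is a sum of two of the $\pj{\epsilon_i}$, say $\Pi_0=\pj{\epsilon_1}+\pj{\epsilon_2}$ after relabelling, and as an element of $W$ it is $\pj{\beta_1}+\pj{\beta_2}$ after relabelling. Hence $\mathrm{span}\{\epsilon_1,\epsilon_2\}=\mathrm{range}\,\Pi_0=\mathrm{span}\{\beta_1,\beta_2\}$, so $\beta_1,\beta_2$ lie in the plane spanned by $\epsilon_1,\epsilon_2$. Finally, suppose $\inp{\epsilon_i}{\beta_j}=0$ for some $i,j\in\{1,2\}$. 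Since $\beta_j$ lies in that plane and is orthogonal to $\epsilon_i$, it is a scalar multiple of $\epsilon_{i'}$ (with $i'$ the other index), so $\pj{\beta_j}=\pj{\epsilon_{i'}}$; but this projection then lies in $\mathcal{A}$ and is strictly dominated by $\Pi_0$, contradicting the minimality of $\Pi_0$. Therefore $\inp{\epsilon_i}{\beta_j}\neq 0$ for all $i,j\in\{1,2\}$.

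The only steps requiring genuine attention are: (i) identifying the combined rank with $\dim(V+W)$ and extracting $\dim(V\cap W)=n-1$; (ii) the elementary structure theory of finite-dimensional commutative $*$-algebras, which converts "$\dim\mathcal{A}=n-1$" into "one rank-$2$ atom and $n-2$ rank-$1$ atoms" by the one-line counting above; and (iii) the last step, where non-orthogonality within the distinguished plane is forced precisely by $\Pi_0$ being an atom rather than splitting further. I expect (iii) — and the bookkeeping of the relabellings — to be where care is needed, but none of this is deep; the real difficulty in the paper lies in the global argument that upgrades this local picture to the sharp bound $r_n=n^2-n+1$.
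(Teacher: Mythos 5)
Your proof is correct, and it takes a genuinely different route from the paper's. The paper works in coordinates: it expands $\beta_j=\sum_i c_{ij}\epsilon_i$, eliminates the diagonal unknowns using the first block, derives from the rank hypothesis the proportionality relations $\conj{c}_{jl}c_{kl}=\alpha_{lm}\conj{c}_{jm}c_{km}$ (with \emph{real} $\alpha_{lm}$) for all $j\neq k$, and then extracts the nonvanishing pattern of the $c_{ij}$ by a case analysis on which coefficients are zero. You instead note that, under the (nondegenerate) trace pairing on Hermitian matrices, the rank of each block equals the real dimension of the span of its projectors, so the hypothesis becomes $\dim(V+W)=n+1$, hence $\dim(V\cap W)=n-1$; the structure theory of a unital commutative $*$-subalgebra with $n-1$ mutually orthogonal atoms of ranks summing to $n$ then forces one rank-$2$ atom and $n-2$ rank-$1$ atoms, which is exactly the lemma's conclusion, with the non-orthogonality claim falling out of minimality of the rank-$2$ atom. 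Your argument is cleaner and more robust: it avoids the coefficient bookkeeping (including the paper's ``a similar type of argument would show'' step), and it visibly generalizes --- a combined rank of $n+k$ yields an intersection algebra with $n-k$ atoms and a correspondingly coarser common block structure. What the paper's computation buys is explicit access to the transition coefficients, but nothing beyond the lemma's statement is used downstream, so nothing is lost. In a polished write-up you should make explicit the two facts you lean on: that block rank equals the dimension of the projector span (nondegeneracy of the trace form on Hermitian matrices), and that every projection in the diagonal algebra of a basis is a subsum of the basis projectors, so the atoms of $V\cap W$ induce matching partitions of the two index sets.
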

\begin{proof}
Writing the matrix elements of $\rho$ in the $\cali{B}$ basis we observe that the equations corresponding to $B$ determine the diagonal elements $\rho_{ii}$ along with the constraint $\sum_i\rho_{ii}=1$. Now let 
\(
\beta_j = \sum_i c_{ij} \epsilon_i 
\). Then the equations in $B'$ are equivalent to the following. 
\[
\begin{split}
& \sum_j|c_{jl}|^2x_{jj}+\sum_{j< k} (\conj{c}_{jl}c_{kl}+ c_{jl}\conj{c}_{kl})x_{jk}+ i(\conj{c}_{jl}c_{kl}  - c_{jl}\conj{c}_{kl})y_{jk}= q_l \text{ where }\\
& \rho_{jk} = \inpr{\epsilon_j}{\rho}{\epsilon_k} = x_{jk}+iy_{jk}
\end{split}
\]
Using $B$ we can eliminate the diagonal terms from these equations (corresponding to $j=k$). Then the hypothesis that the rank of the combined system $B$ and $B'$ is exactly one more than that of $B$ alone implies that 
\beq\label{eq:rank1}
\conj{c}_{jl}c_{kl} = \alpha_{lm} \conj{c}_{jm}c_{km}\; \forall j\neq k
\eeq
Here $\alpha_{lm}$ are {\em real} constants depending on the basis vectors $\beta_l$ and $\beta_m$ or equivalently the $l$th and $m$th row vectors of $B'$. Now, among the $\beta_i$'s there must be at least two vectors which have at least two nonzero coefficients when expressed in the vectors of $\cali{B}$. Otherwise, $\cali{B}$ and $\cali{B}'$ are the same basis apart from permutation. We may take these two vector to be $\beta_1$ and $\beta_2$. We may also assume without loss of generality that $c_{11}$ and $c_{21}$ are both nonzero. Suppose for $\beta_2$ the coefficients $c_{j2}$ and $c_{k2}$ are nonzero. Then, addition of either of the equations corresponding to $\beta_1$ or $\beta_2$ to the system corresponding to $B$ increases rank by 1. The equation \ref{eq:rank1} then shows that $\alpha_{12}$ {\em and } $\alpha_{21}(=\alpha_{12}^{-1})$ are both nonzero. Hence, $c_{21}$ and $c_{22}$ must be nonzero.  
Then equation \eqref{eq:rank1} implies that $c_{12}$ and $c_{22}$ are both nonzero. Using equation \ref{eq:rank1} again we infer that there is a nonzero constant $\gamma$ such that $c_{j2}=\gamma c_{j1}$ for $j= 3,\dotsc, n$. If a third basis vector, say $\beta_3$, had two nonzero coefficients a similar type of argument would show the existence of a nonzero constant $\gamma'$ such that $c_{j3}=\gamma'c_{j1},\;j=2,3,\dotsc,n$. But then $\beta_1,\beta_2\text{ and }\beta_3$ will not be linearly independent. Hence, all $\beta_i,\; i=3,\dotsc, n$ must have exactly one nonzero coefficient $c_{ki}$. Since they are unit vectors this must 1 (apart from an inconsequential factor of modulus 1). That is, $\beta_i=\epsilon_{j_i}$. Since the $\beta_i$'s constitute an orthogonal basis $\epsilon_{j_i}$ cannot be any of the $\epsilon_k$ whose coefficients $c_{k1}$ (and $c_{k2}$) is nonzero. In particular, $j_i\neq 1\text{ or }2$. But there are $n-2$ such $\beta_i$'s. Hence, the set $\{\beta_3, \beta_4,\dotsc, \beta_n\}$ must be a permutation of $\{\epsilon_3, \epsilon_4, \dotsc, \epsilon_n\}$ and $\{\beta_1,\beta_2\}$ form an orthonormal basis in the 2-dimensional space spanned by $\epsilon_1\text{ and }\epsilon_2$. Moreover, from the fact that $c_{11},c_{21},c_{12}\text{ and } c_{22}$ are all nonzero we conclude that the $\inp{\epsilon_i}{\beta_j}\neq 0\text{ for } i,j\in \{1,2\}$. The proof of the lemma is complete. 
\end{proof}

We now improve  Lemma~\ref{lem:Consist_simple} so as to yield a proof of Theorem~\ref{thm:basicCons}, 
by considering several special cases. 

\noindent
\begin{proof}
The implication from left to right
is trivial --- any subset of a consistent set is consistent. 
To prove the converse, we may assume $m>R_n$.  

The case $n=2$ is covered by Lemma \ref{lem:Consist_simple}. So we assume that $n\geq 3$ and  that any $R_n$-subset
of $\F$ is consistent. The problem can be restated as follows. Find a positive semidefinite matrix $\rho$ satisfying the set of linear equations \eqref{eq:basicForm3}. There are $n^2$ independent real variables corresponding to the real and imaginary parts of entries of $\rho$ and hence the real rank of system is $\leq n^2$. Any one of the blocks corresponding to one basis assignment has rank $n$. For each of the rest of blocks of equations corresponding to the pair $\la \B_k, S_k\ra$ we need to verify $n-1$ equations in each block since the last one is guaranteed by the condition $\sum_ip^k_i=1$. Again using 
Lemma \ref{lem:Consist_simple} it suffices to show that every 
$(R_n+1)$-subset 
of $\F$ is consistent. Assume otherwise: there is a set 
$\G=\{\la \B_1, S_1\ra, \dotsc, \la \B_{R_n+1}, S_{R_n+1}\ra\} $ 
such that $\G$ is inconsistent while every proper subset of $\G$ is consistent. 

Recall that the $\rank(\G)$ of a quantum probability assignment is defined to be the rank of the coefficient matrix of the consistency equations \eqref{eq:basicForm3} corresponding to $\G$. 
The set $\G'= \{\la \B_1, S_1\ra, \dotsc, \la \B_{R_n}, S_{R_n}\ra \}$ is consistent by hypothesis. 
Suppose that for some pair  $\la \B_i, S_i\ra, \la \B_{j}, S_{j}\ra\in \G'$, 
the rank of the combined system satisfies 
\beq\label{eq:condRank}
\rank(\{\B_i, S_i\ra, \la \B_{j}, S_{j}\ra\}) \geq  n+2 ~.
\eeq
Consider the chain of inequalities 
\[
\begin{split}
 & \rank(\{\la\B_i,S_i\ra\}) = n  
<n+2 \leq \rank(\la\B_i,S_i\ra,\la\B_j,S_j\ra\})\leq \\ 
& \rank(\{\la\B_i,S_i\ra,\la\B_j,S_j\ra,\la\B_1,S_1\ra\})\leq \dotsc \leq  \rank(\G')\leq n^2 \\
\end{split}
\]
where in each term we add exactly one new basis assignment to the previous set. Since there are $R_n=n^2-n+1$ terms  at least one of the relations $\leq$ must be an equality. Hence, the equations corresponding to at least one of the assignments, say, $\la \B_k,S_k\ra$ must be dependent on the rest. Thus, 
{\em every} solution of $\G'- \{\la \B_k,S_k\ra\}$ must be a solution of $\G'$. Now, $\G_k=\G- \{\la \B_k,S_k\ra\}$ is consistent being a set of cardinality $R_n$. From what we have just seen any solution to $\G_k$ is also a solution for $\la \B_k,S_k\ra$. Hence $\G$ is consistent, a contradiction. We observe that in general we arrive at the same conclusion if any of the relations in the chain 
\beq\label{eq:chain}
\rank(\{\la\B_{i_1},S_{i_1}\ra\})\leq \rank(\{\la\B_{i_1},S_{i_1}\ra,\la\B_{i_2},S_{i_2}\ra\})\leq \dotsc \leq \rank{\G'} 
\eeq
where $(i_1,i_2,\dotsc, i_{R_n})$ is a permutation of $(1,2, \dotsc, R_n)$ is an equality. Note also that if $\rank(\G')<n^2$ 
then an argument similar to the above would show that some assignment $\{\B_k,S_k\}$ must be dependent on the rest {\em without} requiring the condition in \eqref{eq:condRank}. We can then show that $\G$ is consistent. Hence, to complete the proof we have to consider only the following case. 

\vspace{.5 cm} 
\noindent
{\bf Condition 1}. 
For every distinct pair $(i,j)$  in $1\ldots R_n+1$, 
\beq \label{eq:rank_1}
\rank(\{\la\B_i,S_i\ra,\la\B_j,S_j\ra\})= \rank(\{\la\B_i,S_i\ra\})+1=n+1
\eeq
and for every $R_n$-subset $\G' = \{\la\B_{i_1},S_{i_1}\ra,\la\B_{i_2},S_{i_2}\ra, \ldots \la\B_{i_{R_n}},S_{i_{R_n}}\ra \}$ of $\G$,
in the chain 
\beq
\rnk(\{\la\B_{i_1},S_{i_1}\ra\})< \rnk(\{\la\B_{i_1},S_{i_1}\ra,\la\B_{i_2},S_{i_2}\ra\})<\cdots <\rnk(\G')
\eeq
the rank increases exactly by 1 at each step. \\ 

This condition implies that the set of consistency equations corresponding to every $R_n$-subset $\G'$ of $\G$ have rank $n^2$. Let us write the basis vectors of the basis $\B_k$ as 
\[ \B_k=\{ \beta^k_1,\beta^k_2, \dotsc, \beta^k_n\}\]
Without loss of generality we may take the basis $\B_1$ to be any fixed basis, say, the ``computational'' basis $\{\epsilon_1,\dotsc,\epsilon_n\}$.\footnote{Here the computational basis denotes the standard basis in $\complex^n$ where $\epsilon_i$ is the vector with 0's everywhere except the $i$th entry which is 1.}
Then the relation \eqref{eq:rank_1} and lemma \ref{lem:mainThm} imply that 
for the pair of distinct indices $(1,j)$  
from $1..R_n+1$ there exist pairs of distinct  
indices $(r,s)$ and $(j_r, j_s)$ from $1..n$ such that  
\beq \label{eq:rank1Cond}
\beta^j_{r}= a^j_{rr}\epsilon_{j_r}+a^j_{rs}\epsilon_{j_s} \text{ and } \beta^j_{s}= a^j_{sr} \epsilon_{j_r}+a^j_{ss}\epsilon_{j_s} 
\eeq
where $a^j_{rr},a^j_{rs}$ etc.\ are non-zero complex numbers. Moreover, the set $\B_j-\{\beta^j_{r}, \beta^j_{s}\}$ is a permutation of $\B_1-\{\epsilon_{j_r}, \epsilon_{j_s}\}$. 
As before we write $\rho_{ij}=\inpr{\epsilon_i}{\rho}{\epsilon_j}=x_{ij}+iy_{ij}$ in the computational basis $\B_1$. Then the assignment $\assign{1}$ fixes the diagonal entries $\rho_{ii}$ 
and the addition of equations corresponding to $\la \B_j, S_j\ra$ to those of  $\la \B_1, S_1\ra$ yields exactly two relations 
\[\Re(\conj{a}^j_{rr}a^j_{rs}\rho_{j_rj_s})=p^j_{r}\text{ and }\Re(\conj{a}^j_{sr}a^j_{ss}\rho_{j_rj_s})=p^j_{s}\]
Observe that $\inp{\beta_{r}}{\beta_{s}}=0$ implies that $a^j_{rr}\conj{a}^j_{sr}=-a^j_{rs}\conj{a}^j_{ss}$. From this it follows that the left sides of the two equations are dependent. Since they are consistent the right sides have similar dependence. That is, the addition of an arbitrary assignment $\assign{j}\in \G'$ to $\assign{1}$ yields exactly {\em one} additional equation
involving $x_{j_rj_s}$ and $y_{j_rj_s}$. Call it the $j$-incremental equation $\cali{E}_j$. 

It follows that the equations for $\G$ are equivalent to the set of equations 
for $\assign{1}$, which fix the diagonal, together with the equations $\cali{E}_2, \ldots, \cali{E}_{R_n+1}$, 
each of which gives exactly one equation on the real and complex parts of some off-diagonal entry. 
(To get this from the above, note that the equations for $\{\assign{1}\} \cup X\cup \{\assign{j}\}$ 
are equivalent to those for $\{\assign{1}\} \cup X\cup \{\assign{1}, \assign{j}\}$, which are 
equivalent to $\{\assign{1}\} \cup X\cup \{\assign{1}, \cali{E}_j\}$, i.e., 
$\{\assign{1}\} \cup X\cup \{\cali{E}_j\}$, and apply induction.)
Moreover, since  $\rank(\G)=n^2$, for each off diagonal entry  there are exactly two 
indices $j,k$  in $2\ldots R_n$ such that $\cali{E}_j$  and $\cali{E}_k$  concern that entry. 
Taking the entry to be that which is constrained by $\cali{E}_{R_n+1}$, we find that 
$\cali{E}_j$  and $\cali{E}_k$ and $\cali{E}_{R_n+1}$ concern the same entry. 
But $\G'$ is equivalent to $\assign{1}$ with $\cali{E}_2, \ldots, \cali{E}_{R_n}$,
which is consistent, and $\G$ is equivalent to adding to this $\cali{E}_{R_n+1}$, producing an 
inconsistency. The only way this is possible is if  $\cali{E}_j$  and $\cali{E}_k$ and $\cali{E}_{R_n+1}$
are inconsistent. Since $R_n > 4$ for $n>3$, we may find an $R_n$-subset $\G''$ of $\G$ 
that contains $\assign{1},\assign{j},\assign{k}$ and $\assign{R_n+1}$. But the
equations for this set then imply  the inconsistent set $\{ \cali{E}_j,  \cali{E}_k, \cali{E}_{R_n+1}\}$, 
so $\G''$ is inconsistent. This contradicts the assumption that every $R_n$-subset of $\G$ is 
consistent. 
\end{proof}

Observe that the proof of  theorem \ref{thm:basicCons} actually provides an algorithm for verifying consistency of a given set $\F$ of assignments. Thus we start with a an arbitrary  $\G\subset \F$ of cardinality $R_n$. If it is inconsistent stop, and declare that $\F$ is inconsistent. 
Otherwise find $\rank(\G)=k$, say. If $k=n^2$ then we have hermitian matrix $\rho$ which satisfies the assignments in $\G$. The problem then is to check whether the solution is positive definite. This can be done by using the algorithm for Cholesky decomposition \cite{Stewart}. If the algorithm succeeds then, of course, $\rho$ is positive (semi)definite. If it fails then we stop, declaring the system inconsistent. Next, suppose $k< n^2$. Choose a subset of $\G'$ of $\G$ that has maximal rank and let $|\G'|=s$. Add, $R_n-s$ new assignments and continue the process. If we exhaust $\F$ before attaining full rank then we have a system of equations with rank $r<n^2$. Then the hermitian matrix $\rho$ is determined with $n^2-r$ free parameters. To determine whether there is a positive definite solution in the corresponding parameter space we may again follow a Cholesky decomposition type of algorithm but now symbolic.  In the next section, we present bases in 2 and 3 dimension, inspired by the proof of the theorem, to demonstrate that the $R_n$ is optimal for dimension $n$. That is, we have determined the consistency number. We conclude this section with an application of the Tukey's lemma (equivalent to axiom of choice) \cite{Kelley} that yields the following corollary. We use the notations explained before. 
\begin{cor}
Let 
$\F=\{\assign{j}~|~ j\in J \}$
be a collection of quantum probability assignments to bases $\cali{B}_j$ 
in an $n$ dimensional space, indexed by a set $J$ of arbitrary cardinality. Then $\F$ is consistent if and only if every finite subset of cardinality $n^2-n+1$ is consistent. 
\end{cor}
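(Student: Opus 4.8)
The forward direction is immediate: any subfamily of a consistent family is consistent. For the converse I would run a compactness argument --- this is where the reference to Tukey's lemma in the text comes in, through the finite intersection property for a compact space.

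We may assume $J$ is infinite, the finite case being exactly Theorem~\ref{thm:basicCons} (and for $n=2$ one should read $R_2=4$ in place of $n^2-n+1$ throughout). Let $D_n$ denote the set of density operators on $H$: Hermitian $n\times n$ matrices $\rho$ with $\rho\geq 0$ and $\tr\rho=1$. Inside the real vector space of Hermitian matrices, which has real dimension $n^2$, the set $D_n$ is closed and bounded --- positivity together with $\tr\rho=1$ forces $|\rho_{ij}|\leq 1$ --- hence compact. For each $j\in J$ set
\[
K_j=\{\,\rho\in D_n:\ \tr(\rho P^j_i)=p^j_i\ \text{for } 1\leq i\leq n\,\},
\]
which is the intersection of $D_n$ with a closed affine subspace and so is a closed, hence compact, subset of $D_n$. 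By construction a subfamily $\{\assign{j}:j\in J'\}$ of $\F$ is consistent precisely when $\bigcap_{j\in J'}K_j\neq\emptyset$.

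The argument then has two steps. First, every finite subfamily $\G\subseteq\F$ is consistent: if $|\G|=n^2-n+1$ this is the hypothesis; if $|\G|<n^2-n+1$ we enlarge $\G$ inside $\F$ (possible since $J$ is infinite) to a subfamily of size $n^2-n+1$, consistent by hypothesis, whence $\G$ is too; and if $|\G|>n^2-n+1$ then $\G$ is a \emph{finite} family every subset of size at most $n^2-n+1$ of which has just been shown consistent, so Theorem~\ref{thm:basicCons} gives that $\G$ is consistent. Second, $\{K_j:j\in J\}$ is therefore a family of closed subsets of the compact space $D_n$ with the finite intersection property, so $\bigcap_{j\in J}K_j\neq\emptyset$; any $\rho$ in this intersection is a state realizing every assignment in $\F$, so $\F$ is consistent.

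The only content beyond Theorem~\ref{thm:basicCons} lies in the second step, and the key realization is that $D_n$ is not merely closed but \emph{compact} --- this is what makes the finite intersection property available and so licenses the leap from all finite subfamilies to $\F$ itself; I expect this to be the one genuinely non-routine point. I would also be careful with the small cases ($J$ finite or of size below $n^2-n+1$, and $n=2$), which must be dispatched via Theorem~\ref{thm:basicCons}. Finally, a topology-free route is also available and perhaps cleaner: the coefficient matrix of the combined system \eqref{eq:basicForm3} taken over all of $\F$ has only $n^2$ columns, so building a subfamily greedily block by block --- the first block contributing rank $n$, each later retained block raising the rank by at least $1$ --- extracts a finite $\F_0\subseteq\F$ with $|\F_0|\leq n^2-n+1$ and $\rnk(\F_0)=\rnk(\F)$; then $\F_0$ is consistent by hypothesis, and since $\F_0\cup\{\assign{j}\}$ is consistent for every $j$ (again by Theorem~\ref{thm:basicCons}) no block of $\F$ raises the rank of the augmented matrix of $\F_0$, so any positive semidefinite solution of $\F_0$ already satisfies all of $\F$.
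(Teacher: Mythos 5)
Your proof is correct, and it is more explicit than what the paper provides: the paper offers no argument for this corollary at all, merely remarking that it follows by ``an application of Tukey's lemma,'' i.e.\ by some maximality/finite-character principle, without saying how. Your compactness route supplies the missing content cleanly --- the set $D_n$ of density matrices is a closed bounded subset of the $n^2$-dimensional real space of Hermitian matrices, each constraint set $K_j$ is closed in it, Theorem~\ref{thm:basicCons} upgrades the hypothesis to consistency of \emph{all} finite subfamilies, and the finite intersection property finishes. Your alternative rank-extraction argument is arguably even closer to the spirit of the paper's own proofs and buys something extra: since the coefficient matrix has only $n^2$ columns, a greedy choice of blocks (the first contributing rank $n$, each retained later block at least $1$) yields a finite $\F_0$ with $|\F_0|\leq n^2-n+1$ and $\rnk(\F_0)=\rnk(\F)$, and consistency of each $\F_0\cup\{\assign{j}\}$ (via Theorem~\ref{thm:basicCons}) forces every block of $\F$ to be a linear consequence of $\F_0$ with matching right-hand sides, so the positive semidefinite solution of $\F_0$ already realizes all of $\F$; this avoids any appeal to compactness or choice beyond selecting finitely many indices. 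Two points you rightly flag and which are genuine blemishes in the paper's statement rather than in your proof: for $n=2$ the threshold must be read as $R_2=4$ rather than $n^2-n+1=3$, and when $|J|<n^2-n+1$ the hypothesis as literally stated (subsets of cardinality \emph{exactly} $n^2-n+1$) is vacuous, so one should read ``at most $n^2-n+1$'' throughout.
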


\section{Optimality}\label{sec:optimal}
In this section we complete the proof that the consistency number $R_n$ given in Theorem~\ref{thm:basicCons} is optimal. 
We show that there exist quantum probability assignments $\F=\{(\cali{B}_k, S_k)~|~ k=1,2, \dotsc,\; m\geq R_n\}$
such that every $R_n-1$-subset of $\F$ is consistent (hence also every $r$-subset of $\F$ for $r<R_n$ is consistent) 
but $\F$ is not. For example, in 2 dimensions the bound is 4. So we have to show that there exist {\em inconsistent} probability assignments to 4 bases such that any three of them is consistent. For 3 dimensions the bound is 7. We obviously expect the construction of bases and corresponding probability assignments to be much harder in 3 and higher dimensions. We construct these examples in this section. 

\subsection{Dimension=2}
Let the dimension $n=2$ and define 1-dimensional projections
\[ 
\begin{split}
&P_1^1= \bpmat 1 & 0 \\ 0 & 0\epmat, \quad
P_1^2= \frac{1}{2}
\bpmat 1 & -1 \\ -1 & 1 \epmat,
\quad P_1^3= \frac{1}{2}\bpmat 1 & -i \\ i & 1 \epmat \\
&\text{ and } P_1^4=   \frac{1}{5} \bpmat 4 &  (6+8i)/5 \\ (6-8i)/5  & 1 \epmat 
\end{split}
\]

Let the four bases (rather the projections corresponding to the basis vectors) be given by
\[
\cali{B}_j= \{P^j_1, I-P^j_1\},\; j=1,\dotsc,4 
\]
where $I$ is the 2-dimensional unit matrix. Then the following simultaneous probability assignment is unsatisfiable although any {\em three} of them is satisfiable. 
\[ 
p(P_1^1) =\frac{1}{2}, \; p(P_1^2)=\frac{5}{12} ,\;p(P^3_1)= \frac{3}{8}\text{ and } p(P_1^4)=\frac{9}{16}
\]
In two dimensions we need specify the probability corresponding to only one of the outcomes in a projective measurement. It is routine to verify that this assignment is satisfiable for any three bases but not for all four. The two dimensional case is exceptional as evidenced by the failure of theorems of Kochen and Specker and Gleason\cite{Gleason}.  
\subsection{ Dimension 3 and higher}\label{sec:dim3}
In three dimensions the consistency number according to Theorem \ref{thm:basicCons} is $r_3\leq 7$. We will construct a projective measurement system consisting of 7 bases and corresponding probability assignments such that {\em any} subset with 6 bases is consistent but the whole set is not satisfiable showing that the consistency number is indeed 7. The bases will be specified by the corresponding projectors and since each basis is a complete set (the projectors add to $I$, the identity matrix in 3 dimensions) we need specify only two of them. Let 
\[ \cali{B}_i= \{P^i_1, P^i_2, P^i_3\}, \; i=1,\dotsc, 7\]
\[
\begin{split}
&P^1_1 =\begin{pmatrix} 1 & 0 & 0 \\ 0 & 0 & 0 \\ 0 & 0 & 0 \end{pmatrix},P^1_2 = \begin{pmatrix} 0 & 0 & 0 \\ 0 & 1 & 0 \\ 0 & 0 & 0 \end{pmatrix} \quad P^2_1 = \frac{1}{2}\begin{pmatrix} 1 & 1 & 0 \\ 1 & 1 & 0 \\ 0 & 0 & 0 \end{pmatrix}, \\
& P^2_2 = \frac{1}{2}\begin{pmatrix} 1 & -1 & 0 \\ -1 & 1 & 0 \\ 0 & 0 & 0 \end{pmatrix}\quad 
P^3_1 = \frac{1}{2}\begin{pmatrix} 1 & 0 & 1 \\ 0 & 0 & 0 \\ 1 & 0 & 1\end{pmatrix}, P^3_2 = \frac{1}{2}\begin{pmatrix} 1 & 0 & -1 \\ 0 & 0 & 0 \\ -1 & 0 & 1\end{pmatrix} \\
& P^4_1 = \begin{pmatrix} 0 & 0 & 0 \\ 0 & 1 & 1 \\ 0 & 1 & 1 \end{pmatrix}, 
P^4_2 = \begin{pmatrix} 0 & 0 & 0 \\ 0 & 1 & -1 \\ 0 & -1 & 1 \end{pmatrix} \quad 
P^5_1 = \frac{1}{2}\begin{pmatrix} 1 & -i & 0 \\ i & 1 & 0 \\ 0 & 0 & 0 \end{pmatrix}, \\
& P^5_2 = \frac{1}{2}\begin{pmatrix} 1 & i & 0 \\ -i & 1 & 0 \\ 0 & 0 & 0 \end{pmatrix}\quad P^6_1 = \frac{1}{2}\begin{pmatrix} 1 & 0 & i \\ 0 & 0 & 0 \\ -i & 0 & 1\end{pmatrix}, P^6_2 = \frac{1}{2}\begin{pmatrix} 1 & 0 & -i \\ 0 & 0 & 0 \\ i & 0 & 1\end{pmatrix} \\
&  P^7_1 = \frac{1}{2}\begin{pmatrix} 0 & 0 & 0 \\ 0 & 1 & -i \\ 0 & i& 1\end{pmatrix} \text{ and } P^7_2 = \frac{1}{2}\begin{pmatrix} 0 & 0 & 0 \\ 0 & 1 & i \\ 0 & -i & 1\end{pmatrix} \\
\end{split}
\]
The seven bases given above are particularly simple. We now add another base $\cali{B}'=\{Q_1,Q_2\}$ with more complicated structure. 
\[
Q_1 = \begin{pmatrix} \frac{1}{3} & \frac{e^{7\pi i/12}}{\sqrt{6}}  & \frac{e^{\pi i/3}}{3\sqrt{2}} \\ \frac{e^{-7\pi i/12}}{\sqrt{6}} & \frac{1}{2} & \frac{e^{-\pi i/4}}{2\sqrt{3}} \\
\frac{e^{-\pi i/3}}{3\sqrt{2}} & \frac{e^{\pi i/4}}{2\sqrt{3}} & \frac{1}{6} \end{pmatrix} \text{ and } 
Q_2 = \frac{6}{11}\begin{pmatrix} \frac{1}{2} & \frac{e^{-3\pi i/4}}{\sqrt{6}}  & \frac{e^{-\pi i/3}}{\sqrt{2}} \\ \frac{e^{3\pi i/4}}{\sqrt{6}} & \frac{1}{3} & \frac{e^{5\pi i/12}}{\sqrt{3}} \\
\frac{e^{\pi i/3}}{\sqrt{2}} & \frac{e^{-5\pi i/12}}{\sqrt{3}} & 1 \end{pmatrix}
\]
Now suppose we have the probability assignments
\beq \label{eq:assign1}
v(P^i_j)= p^i_j \text{ and }v(Q_j)=q_j,\; i=1, \dotsc, 7 \text{ and } j=1,2.
\eeq
If there exists a density matrix 
\beq \label{eq:density1}
\rho = \begin{pmatrix} a_1 & z_3 & \bar{z}_2 \\ \bar{z}_3 & a_2 & z_1 \\ z_2 & \bar{z}_1 & a_3 \end{pmatrix}, \quad a_i\geq 0 \text{ and } a_1+a_2+a_3=1
\eeq
with measurement probabilities $\tr(\rho P^i_j) = p^i_j$ then the following sets of equations determine the real parts of $z_i$. 
\begin{subequations} \label{eq:consProbRe}
\begin{align}
& a_1= p^1_1 \text{ and } a_2= p^1_2 \label{eq:Re1}\\
& \Re(z_3) = p^2_1- (a_1+a_2)/2= (a_1+a_2)/2-p^2_2 \label{eq:Re2}\\
&\Re(z_2) = p^3_1- (a_1+a_3)/2= (a_1+a_3)/2-p^3_2 \label{eq:Re3}\\
& \Re(z_1) = p^4_1- (a_2+a_3)/2= (a_2+a_3)/2-p^4_2 \label{eq:Re4}
\end{align}
\end{subequations}
Similarly, we have the imaginary parts determined by the bases \(\cali{B}_5, \cali{B}_6, \text{ and } \cali{B}_7\). 
\begin{subequations} \label{eq:consProbIm}
\begin{align}
& \Im(z_3) = (a_1+a_2)/2 - p^5_1= p^5_2-(a_1+a_2)/2 \label{eq:Im1}\\
&\Im(z_2) = (a_1+a_2)/2 - p^6_1= p^6_2-(a_1+a_2)/2\label{eq:Im2}\\
& \Im(z_1) = (a_1+a_2)/2 - p^7_1= p^7_2-(a_1+a_2)/2 \label{eq:Im3}
\end{align}
\end{subequations}
Note that the 
addition of each of the blocks for the bases $\cali{B}_2 , \dotsc, \cali{B}_7$ to the block for basis $\cali{B}_1$
increases the rank by exactly 1. 
Finally, for the base $\cali{B}'=\{Q_1, Q_2\}$ we have 
\beq \label{eq:consProbGen}
\begin{split}
&q_1= \frac{a_1}{3}+\frac{a_2}{2}+\frac{a_3}{6}+\frac{2\Re(z_3e^{\frac{-i7\pi}{12}})}{\sqrt{6}}+ 
\frac{2\Re(\bar{z}_2e^{\frac{-i\pi}{3}})}{3\sqrt{2}}+\frac{\Re(z_1e^{\frac{i\pi}{4}})}{\sqrt{3}}\\
& q_2= \frac{6}{11}\left(\frac{a_1}{2}+\frac{a_2}{3}+a_3+\frac{2\Re(z_3e^{\frac{i3\pi}{4}})}{\sqrt{6}}+ 
\frac{2\Re(\bar{z}_2e^{\frac{i\pi}{3}})}{\sqrt{2}}+\frac{2\Re(z_1e^{\frac{-i5\pi}{12}})}{\sqrt{3}}\right)\\
\end{split}
\eeq
The choice of bases yielding these equations follows a pattern. Each block 
for basis $\cali{B}_i,\; i> 1$ determines exactly one unknown. For example, 
 $\cali{B}_3$
 fixes $\Re(z_3)$ (see \eqref{eq:Re2}). Moreover, it is clear that the sets of equations (\ref{eq:consProbRe}) and (\ref{eq:consProbIm}) are {\em independent}. So, if two distinct subsets consisting of 6 equations each are satisfied by some density matrix then we have already found a unique solution. We infer that this set of 7 bases cannot used to show that $r_3=7$. Hence we have an additional base  $\cali{B}'$. Let us then drop one of the bases, say $\cali{B}_7$, and replace it with $\cali{B}'$. Let ${\mathcal O }= \{\cali{B}_1,\dotsc, \cali{B}_6, \cali{B}'\}$. The last basis $\cali{B}'$ has the property that if we omit any other basis say $\cali{B}_6$ 
 from ${\mathcal O}$ then the remaining system of equations has maximal rank. That is, the resulting system of equations (\ref{eq:consProbRe}), (\ref{eq:consProbIm}$'$) (omitting \eqref{eq:Im3}) and (\ref{eq:consProbGen}) is {\em overdetermined}. We describe below a ``procedure'' for finding probability assignments to the bases in $\cali{O}$ such that any six of them are consistent but the whole set is not. 

First, a necessary and sufficient condition for $\rho$ given in (\ref{eq:density1}) to be density matrix is that the following relations are satisfied. 
\beq \label{eq:density2}
\begin{split}
&\det(\rho)= a_1a_2a_3 -\sum_{i=1}^3 a_i|z_i|^2 +2\Re(z_1z_2z_3)\geq 0,\\
&  a_1a_2-|z_1|^2\geq 0, ~~ a_3a_1-|z_2|^2\geq 0, ~~ a_1a_2 -|z_1|^2\geq 0 \text{ and } a_1,a_2\geq 0\\
\end{split}
\eeq
These conditions reflect the fact that for a hermitian matrix to be positive definite it is necessary and sufficient that all the principal minors have non-negative determinant. Now choose a density matrix $\rho$ such that the inequalities in \eqref{eq:density2} are strict. We also choose $a_1,a_2>0$ such that $a_1+a_2<1$ and substitute $a_3=1-a_1-a_2$. Now treat the $a_i$, $\Re(z_i)$ and $Im(z_i)$ as (real) variables. Then the left side of all the inequalities above are continuous functions of these variables. Here we consider $\rho$, parametrized by $\{a_1,a_2,\Re(z_i),\Im(z_i):i=1,2,3\}$, as a member of $\real^8$ ($\real$ is the field of reals). Hence there is an open neighborhood $N_1$ of $\rho$ such that these inequalities hold everywhere in $N_1$. Compute $\tr(\rho P^i_j)=p^i_j,\; i=1,\dotsc,6$, $\tr(\rho Q_1)=q_1$ and $\tr(\rho Q_2)=q_2'$ (note the ``prime''). From the set of bases $\cali{O}$ if we drop any base from the set $\{\cali{B}_1,\dotsc,\cali{B}_6\}$ then the remaining set of equations have maximal (real) rank 8. Let $A_i$ denote the invertible matrix of maximal rank consisting of coefficients of a set of {\em independent} equations corresponding to the omission of $\cali{B}_i,\;i=2, \dotsc, 6$. 

The invertibility of the matrices $A_i,\; i=1,\dotsc, 6$ implies that the images $A_i(N_1)$ are open sets and hence $\bigcap_i A_i(N_1)\equiv G$ is open. The point $\alpha'=(p^1_1,p^1_2, \dotsc, p^6_1,p^6_2, q_1,q'_2)^T$ is in $G$ as it is the image of $\rho$. We choose a point $q_2\neq q'_2$ such that if the last ``coordinate'' $q'_2$ in $\alpha'$ is replaced by $q_2$ then the resulting vector $\alpha\in G$. Since $G$ is open such a choice is always possible. This choice of $q_2$ makes the new assignments for  $\cali{O}$ inconsistent. But restricting this assignment to any six bases is still consistent. First, if we omit $\cali{B}'$ then clearly the assignment in rest of the bases is satisfiable by $\rho$ itself. If we omit any other base say $\cali{B}_i$ then a desired density matrix, say $\gamma_i$ is given by $A_i^{-1}(\alpha)$. We thus conclude that:  
\begin{quote} 
{\em The probability assignments to the seven bases $\cali{O}$ given above are such that any six of them is consistent with a quantum state (density matrix) but the whole set is not}. 
\end{quote} 
Next, it is clear that we can mimic the construction of the bases given above for 3 dimensions in any dimension $n>3$. Thus, define the bases $\cali{B}_0 \text{ and }\cali{B}_{ij},\, i>j$ as follows. Let $e_i=(0,\dotsc, 0,1,0\dotsc,0)^T$ ($i$th coordinate=1) be the standard basis (the computational basis). Then, $\cali{B}_0=\{P^0_i=e_ie_i^T: i=1, \dotsc, n-1\}$ is the basis consisting of projectors on the standard basis. They determine the diagonal elements of the state $\rho$. Next, let 
\beq\label{eq:basisRe}
\begin{split}
& \cali{B}_{ij} =\{ P^{ij}_k:1\leq i<j\leq n\text{ and } 1\leq k\leq n-1\},\; P^{ij}_i= \frac{1}{2} (e_i+e_j)(e_i+e_j)^T, \\ 
& P^{ij}_j= \frac{1}{2} (e_i-e_j)(e_i-e_j)^T 
\text{ and } P^{ij}_k = e_k e_k^T \; k\neq i,j \\
\end{split}
\eeq
These $n(n-1)/2$ bases determine the real parts of the off-diagonal elements of $\rho$. These correspond to \(\cali{B}_2, \cali{B}_3 \text{ and } \cali{B}_4\) in the 3-dim case above. We similarly define the $n(n-1)/2$ bases for the imaginary parts of off-diagonal elements of $\rho$. 
\beq \label{eq:basisIm}
\begin{split}
& \cali{B}'_{jk} =\{ P^{jk}_l:1\leq j<k\leq n\text{ and } 1\leq k\leq n-1\},\; P^{jk}_j= \frac{1}{2} (e_j+ie_k)(e_j+ie_k)^T, \\ 
& P^{jk}_k= \frac{1}{2} (e_j-ie_k)(e_j-ie_k)^T
\text{ and } P^{jk}_l = e_l e_l^T \; l\neq j,k \\
\end{split}
\eeq
As in the 3 dimensional case we replace one of the bases, say $\cali{B}'_{n-1,n}$, by a basis $Q$. Call the new system of bases $\cali{Z}$. The basis $Q$ is chosen so that the probability assignments to all the bases in $\G$ yield an over-determined system of equations. This can be achieved by ensuring that the rank the system corresponding to $\G-\cali{B}_{ij}\text{ and } \cali{Z}-\cali{B}'_{ij},\; \forall 1\leq i<j\leq n$ is maximal ($=n^2$). We can then use topological arguments similar to the 3 dimensional case to show that every subsystem of $\cali{Z}$ has a consistent solution but the full system $\cali{Z}$ consisting of $R_n=n^2-n+1$ bases does not. 
Thus, the
number $R_n$ in Theorem \ref{thm:basicCons} is the best possible. We can now state the following. 
\begin{thm}
The consistency number of an $n$-dimensional quantum system is  
$r_n = R_n$. 
\end{thm}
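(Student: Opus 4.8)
The final statement to prove is $r_n = R_n$, combining the upper bound (Theorem~\ref{thm:basicCons}, already proven) with the matching lower bound established by the explicit constructions in Section~\ref{sec:optimal}.

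\medskip

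\noindent
\textbf{Proof proposal.}
The plan is to assemble the two halves of the inequality $r_n = R_n$, each of which has essentially been carried out above, and to certify that the lower-bound construction in arbitrary dimension $n \geq 3$ genuinely works the way the $n = 3$ case does. The upper bound $r_n \leq R_n$ is exactly Theorem~\ref{thm:basicCons}: if every $R_n$-subset of $\F$ is consistent then $\F$ is consistent, so the consistency number cannot exceed $R_n$. For the matching lower bound $r_n \geq R_n$ one must exhibit, for each $n$, a quantum probability assignment $\F$ with $|\F| = R_n$ that is \emph{inconsistent} yet every $(R_n-1)$-subset of which is consistent; this forces $r_n \geq R_n$ since a smaller putative consistency number would (wrongly) certify $\F$ consistent. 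For $n = 2$ this is the four explicit projections $P_1^1, \dots, P_1^4$ with the stated probabilities $\tfrac12, \tfrac{5}{12}, \tfrac38, \tfrac{9}{16}$ (a direct computation), and for $n = 3$ it is the system $\cali{O} = \{\cali{B}_1,\dots,\cali{B}_6,\cali{B}'\}$ together with the topological argument that produces a $q_2$ near $q_2'$ making $\cali{O}$ inconsistent while each six-element subsystem, being either solved by $\rho$ itself (omit $\cali{B}'$) or by $A_i^{-1}(\alpha)$ (omit $\cali{B}_i$), remains consistent.

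\medskip

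\noindent
The only genuine work left is to verify that the general-$n$ construction $\cali{Z}$ described via \eqref{eq:basisRe}, \eqref{eq:basisIm} and the replacement basis $Q$ has the two structural properties the topological argument relies on. First, the ``diagonal'' basis $\cali{B}_0$ together with the $n(n-1)$ bases $\cali{B}_{ij}$, $\cali{B}'_{jk}$ must, jointly, determine all $n^2$ real parameters of $\rho$ with each added block contributing exactly rank $1$ over $\cali{B}_0$ — this is immediate from the form of $P^{ij}_i, P^{ij}_j$ (respectively $P^{jk}_j, P^{jk}_k$), which after eliminating the diagonal entries fixes precisely $\Re(\rho_{ij})$ (respectively $\Im(\rho_{jk})$) and nothing else. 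Second, and this is the point where one must be slightly careful, $Q$ must be chosen so that dropping \emph{any} single basis other than $Q$ from $\cali{Z}$ still leaves a system of maximal real rank $n^2$; equivalently, the single scalar equation contributed by (each component of) $Q$ must depend non-trivially on every off-diagonal coordinate that the omitted basis was responsible for, so that $Q$ ``covers'' every deleted constraint. Since $Q$ ranges over an algebraic family of bases and the maximal-rank condition is a non-vanishing-determinant (Zariski-open) condition, a generic choice of $Q$ works; alternatively one writes down an explicit $Q$ generalizing the $3 \times 3$ matrices $Q_1, Q_2$ whose entries involve all off-diagonal phases, as the text indicates.

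\medskip

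\noindent
With those two properties in hand, the topological argument transfers verbatim: pick $\rho$ with all the positivity inequalities \eqref{eq:density2} (in the $n$-dimensional analogue) strict, so that $\rho$ has an open neighbourhood $N_1$ inside the set of genuine density matrices; for each basis $\cali{B}$ one may drop, let $A_{\cali{B}}$ be the invertible coefficient matrix of a maximal independent subsystem, so $A_{\cali{B}}(N_1)$ is open and $G = \bigcap_{\cali{B}} A_{\cali{B}}(N_1)$ is an open neighbourhood of the image point $\alpha'$ coming from $\rho$; perturb the single coordinate corresponding to (a component of) $Q$ to a nearby value, staying inside $G$, to obtain $\alpha \in G$ encoding an assignment that is inconsistent for all of $\cali{Z}$ (because $\cali{Z}$ with $\alpha$ is over-determined and $\alpha \neq \alpha'$) but for which every $(R_n-1)$-subsystem is solved by $A_{\cali{B}}^{-1}(\alpha)$, a bona fide density matrix since $A_{\cali{B}}^{-1}(\alpha) \in N_1$. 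This yields $r_n \geq R_n$, and combined with Theorem~\ref{thm:basicCons} gives $r_n = R_n$. The main obstacle is purely the rank/genericity bookkeeping for $Q$ in general dimension; everything else is either an explicit low-dimensional computation or a routine open-mapping argument.
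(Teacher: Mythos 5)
Your proposal is correct and follows essentially the same route as the paper: the upper bound is Theorem~\ref{thm:basicCons}, and the lower bound comes from the explicit $n=2$ and $n=3$ constructions plus the general-$n$ family $\cali{B}_0,\cali{B}_{ij},\cali{B}'_{jk}$ with one basis replaced by $Q$ and the open-mapping perturbation of a single probability coordinate. Your only addition is the Zariski-genericity justification for choosing $Q$ so that every single-basis deletion leaves full rank $n^2$ --- a point the paper merely asserts --- and that is a reasonable way to close the one step the paper leaves informal.
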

Combining this theorem with Gleason's theorem \cite{Gleason} we get the following theorem. First, recall some definitions needed to state Gleason's famous result. Let $v$ be a function on the set $\cali{P}(H)$ of projections on a finite-dimensional Hilbert space $H$ such that
\begin{subequations}\label{eq:frame}
\begin{align}
& 0\leq v(E) \leq 1,\; v(I)=1 \label{frame1}\\
&v(E+F)=v(E)+v(F) \text{ if } EF=0 \text{ {\em (orthogonal projections)}} \label{frame2}
\end{align}
\end{subequations}
Such a function is called frame function. 
\begin{thm}
Let 
${\mathscr Z}=\{\cali{U}_i~|~i\in \cali{I} \}$ where $\cali{I}$ is an indexing set  and 
$\cali{U}_i= \{E^i_1,\dotsc, E^i_{k_i}\}$ consists of orthogonal projections: $E^i_jE^i_l= \delta_{jl}$
on a Hilbert space of dimension $n\geq 3$.  
Suppose a  real-valued function $f$ on ${\mathscr Z}$  satisfies \eqref{frame1}. Further, assume that for  every set $S\subset {\mathscr Z}$ of cardinality $n^2-n+1$ 
there exists a {\em frame} function $v_S$ such that $f(E)=v_S(E)$ for all $e\in \cup S$. 
Then there exists a frame function $v$ such that $f=v$ on $\cup {\mathscr Z}$. 
\end{thm}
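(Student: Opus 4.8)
The plan is to read the statement through Gleason's theorem and then invoke the consistency-number result of Section~\ref{sec:upper}. Since $n\ge 3$, Gleason's theorem identifies the frame functions on $\cali{P}(H)$ with the density matrices: a function $v$ satisfies \eqref{frame1} and \eqref{frame2} if and only if $v(E)=\tr(\rho E)$ for some positive semidefinite $\rho$ with $\tr\rho=1$, and every such $v$ automatically satisfies \eqref{frame1}. Under this dictionary the hypothesis says precisely that for every $S\subseteq{\mathscr Z}$ with $|S|=n^2-n+1$ there is a density matrix $\rho_S$ with $\tr(\rho_S E)=f(E)$ for all $E\in\bigcup S$, while the conclusion to be produced is a single density matrix $\rho$ with $\tr(\rho E)=f(E)$ for all $E\in\bigcup{\mathscr Z}$; for then $v$ defined by $v(E):=\tr(\rho E)$ is the required frame function. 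So everything comes down to converting $(n^2-n+1)$-wise density-matrix consistency into global consistency, which is exactly what Theorem~\ref{thm:basicCons} and its corollary provide.

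Assume first that each $\cali{U}_i$ is a maximal measurement, i.e.\ a complete orthonormal basis of $n$ rank-one projections $\{E^i_1,\dots,E^i_n\}$. I would check that $f$ then restricts to a genuine quantum probability assignment in the sense of Section~\ref{sec:defnCons}: nonnegativity and the bound $1$ are built into \eqref{frame1}, and for the normalisation $\sum_j f(E^i_j)=1$ one argues as follows. Taking $|{\mathscr Z}|\ge n^2-n+1$ (if $|{\mathscr Z}|<n^2-n+1$ the claim is immediate on putting $S={\mathscr Z}$), pick any $S_0$ of size $n^2-n+1$ with $\cali{U}_i\in S_0$; since $v_{S_0}$ is a frame function agreeing with $f$ on $\bigcup S_0$ and $\sum_j E^i_j=I$, additivity of $v_{S_0}$ together with $v_{S_0}(I)=1$ gives $\sum_j f(E^i_j)=1$. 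Thus $\F=\{\la\cali{B}_i,S_i\ra\}_{i\in\cali{I}}$ with $\cali{B}_i=\cali{U}_i$ and $S_i=\la f(E^i_1),\dots,f(E^i_n)\ra$ is a quantum probability assignment, and by the dictionary of the first paragraph every subcollection of $\F$ of size $n^2-n+1$ is consistent. The corollary to Theorem~\ref{thm:basicCons} (Tukey's lemma reduces the arbitrary index set $\cali{I}$ to its finite subsets, to which Theorem~\ref{thm:basicCons} applies) then yields one density matrix $\rho$ satisfying all of $\F$, and $v(E):=\tr(\rho E)$ finishes the argument.

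The step I expect to be the main obstacle is not the above --- which is essentially bookkeeping around the Gleason correspondence --- but the removal of the ``maximal measurement'' hypothesis: if the $\cali{U}_i$ are allowed to be coarse-grained measurements, one must first refine each $\cali{U}_i$ to a maximal one before Theorem~\ref{thm:basicCons} can be quoted. My plan here would be to fix, for each $\cali{U}_i$, a single $(n^2-n+1)$-subset $S_i\ni\cali{U}_i$, to split each $E^i_j$ into rank-one projections using an orthonormal basis of its range, and to assign to each new atom $P$ the value $f(P):=\tr(\rho_{S_i}P)$; since the constraint imposed by a block is the sum of the constraints imposed by any such refinement, one then has to verify that $(n^2-n+1)$-wise consistency is inherited by the refined family, which I would do along the lines of Lemma~\ref{lem:Consist_simple}, using that a full-rank system of consistency equations has a unique solution. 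Granting this reduction, Theorem~\ref{thm:basicCons} applies verbatim to the refined family and the conclusion of the previous paragraph is obtained.
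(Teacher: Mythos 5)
Your proposal follows essentially the same route as the paper: Gleason's theorem converts the hypothesis into $(n^2-n+1)$-wise consistency of a quantum probability assignment, coarse (higher-rank) projections are refined into rank-one ones, and Theorem~\ref{thm:basicCons} together with its corollary for arbitrary index sets supplies the single density matrix whose trace functional is the required frame function. The step you single out as the main obstacle --- transferring subset-wise consistency to the refined family --- is exactly the point the paper dispatches with a one-line ``it is easy to see,'' so your account is, if anything, more explicit about where the remaining work lies.
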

\begin{proof}
Gleason's theorem states that any frame function $w$ on a Hilbert space 
of dimension at least 3 
is induced by some density operator $\rho$: $w(E)=\tr(\rho E)$. This implies that for every subset $S$ of ${\mathscr Z}$ of cardinality $n^2-n+1$ the function $f$ defines a consistent quantum probability assignment in the sense explained in Section \ref{sec:defnCons}. However as the projections $E_i^j$ may not be 1-dimensional we cannot apply Theorem \ref{thm:basicCons} directly. In this case we adopt the following procedure. Let $E$ be a projection operator of rank $k>1$. We decompose $E=E_1+\dotsc+E_k$ where $E_i$ are 1-dimensional projection operators with $E_iE_j=\delta_{ij}$. We replace each member of $\cali{U}_i$ by the projectors in its 1-dimensional decomposition. Let the resulting set of {\em orthonormal} projectors be $\cali{U}'_i$. Let ${\mathscr Z}'=\{\cali{U}'_i\}$. It is easy to see that any $S'\subset {\mathscr Z}'$ has consistent probability assignment induced by the original assignment on $S\subset {\mathscr Z}$. Therefore, from Theorem \ref{thm:basicCons} there is a frame function $v$ (given by a density matrix) which yield the same probabilities. The additivity of the frame functions now implies that $f=v$ on $\cup {\mathscr Z}$. 
\end{proof}

\section{Applications} \label{sec:app}
In this section we consider some applications of the constructions in the previous section. First, we sketch a {\bf secret sharing scheme}  involving copies of 
entangled qubits. A version of the secret sharing problem \cite{Shamir} is as follows. A group of $k$ players are to share a secret (represented by a number). If any  subset of at least $r$ players pool their resources (their {\em shares}) the secret is revealed otherwise it is not. This is called a $(k,r)$ (threshold) secret sharing scheme. Our scheme is probabilistic and we only require that any set of $r$ players  can discover the secret with high probability whereas for less than $r$ players the probability is low. We sketch a scheme using entangled qubits. 

Suppose we prepare multiple copies of a composite quantum system of dimension $N$. To each of the $k$ players we send a number of copies of  the system (or some part thereof) along with instructions for specific measurements. If we arrange it so that the probability of 
reconstructing the set from any $r\leq k$ expectation values is high, but negligible for any subset of cardinality $r'< r$ measurements then we have a $(k,r)$ secret sharing scheme. We outline below such a scheme for a system of $n$ qubits. Let $N=2^n$, the dimension of the system. For simplicity we will only consider a $(k,k)$ secret sharing scheme. 
\be
\item
A large number of copies of a state randomly chosen from an initial set $S$ of $K_0$ states is prepared by the {\em dealer}. 
\item
The dealer provides the players 
with $m$ copies each from the original ensemble called their {\em shares}.  
\item
The players are also given instructions about their respective bases in which projective measurements are to be performed: the bases are from $\cali{B}_{ij}$ or $\cali{B}'_{ij}$ given in  \eqref{eq:basisRe} and \eqref{eq:basisIm}. There are $N^2-N$ such bases. Thus each player is actually measuring the real or imaginary part of an off-diagonal element of the density matrix. 
\item
We assume that the diagonal elements are conveyed to each player along with the measurement instruction. These may be used as group ``password'' and/or as a check for interference. 
\item
We assume that the set $S$ of states are so chosen that when all but one of the (real) parameters characterizing the off-diagonal elements of $\rho$ are fixed there still a large number $\geq K_1$ of possible states 
with different values of the remaining parameter. We omit the details of how this is done in 
in this sketch. It implies that $K_0= O(K_1n^2)$. The values of each parameter are separated by a distance $>\lambda$. 
\item
Suppose $\alpha$ is one of the parameters. We see from equations like \eqref{eq:basisRe} and \eqref{eq:basisIm} that the expectation values yield the probabilities and hence the value of $\alpha$. Using the Chernoff bound (classical) it is seen that if $\alpha_0$ is the correct value of $\alpha$ and $\bar{\alpha}$ the calculated value then 
\[ \text{probability } (|\alpha-\bar{\alpha}|> \lambda) <O(e^{-\lambda^2}/m) \]
Therefore if we are aware of the states in $S$ then with high probability we can determine the parameter $\alpha$. 
\item
If all the $N^2-N$ players combine their measurement results then (along with the information about diagonal elements) the particular state $\rho$ is determined with high probability. However, with just one player missing the probability drops to $< 1/K_1$. 
\ee
In this probabilistic protocol we assume that we have secure quantum channels with negligible errors. It is possible to devise a more elaborate scheme to accommodate insecure channels. Similarly, we can devise a general $(k,r)$ secret sharing scheme combining the quantum scheme with a classical one involving polynomial evaluation. 

The selection of the set $S$ is a bit more challenging. There are several possible approaches however. One is to start with a positive definite matrix and then keeping all the values of the parameter vary one of the parameters characterizing the off-diagonal elements. For example, let $\alpha_{ij}=\Re(\rho_{ij})$ and $\beta_{ij}=\Im(\rho_{ij}),\; i< j$ and suppose we want to vary $\beta_{12}$. Using topological arguments as before we know that there some neighborhood of $\beta_{12}$ such that for all values of the latter in that neighborhood we get a positive definite matrix (fixing the normalization is trivial). We divide the neighborhood into segments of appropriate size (fixing $\lambda$) and pick our values for $\beta_{12}$. Alternatively, we could use the fact that for any hermitian matrix $A$, $A^2$ is positive definite. We vary the parameters of $A$ to achieve our goal. We will not go into a detailed analysis here as our primary goal was to demonstrate potential applications of the constructions of preceding sections.

Our protocol has some similarity to the one given in \cite{Hillery}. The difference is ours is probabilistic. However, our protocol accommodates a larger number of players at the expense of requiring multiple copies of states. Thus using 4 qubits ($N=16$) we can accommodate $N^2-N=240$ players. 

The state $\rho$ may also be prepared by a {\em purification} process \cite{NC}. Thus, given a density matrix $\rho$ in an $N$ dimensional Hilbert space $H_N$ we can find a pure state $\ket{\Psi}$ in $H_M\tensor H_N$, where $H_M$ is an $M$-dimensional space, such that 
\[\rho= \tr_{H_M}(\pj{\Psi})\]
The operation $\tr_{H_M}$ is the partial trace with respect to $H_M$. This method eliminates the need for creating a mixed state by random selection. The dealer prepares the state $\ket{\Psi}$ and the players get only the components (qubits) in $H_N$. An added advantage of this procedure is that the dealer can use the fact that $\ket{\Psi}$ is entangled with states in $H_M$ to ensure that the players are using their  assigned measurements. 

The bases given in the equations \eqref{eq:basisRe} and \eqref{eq:basisIm} could be used for quantum state tomography. In this case, we do not require maximal outcomes in all the bases. Thus, assuming the diagonal elements of the density matrix have been estimated we need only measurements with three alternatives. For example, if we want to estimate $\Re(\rho_{12})$ then using the notation in \eqref{eq:basisRe} the orthogonal projections $P^{12}_1$, $P^{12}_2$ and $M^{12}= \sum_{k\neq 1,2} P^{12}_k$ form a complete set and provide the three outcomes. Note also that projections $P^{ij}_i$ and $P^{ij}_j$ can be obtained by repeated applications of Hadamard and control gates. 

\section{Discussion}\label{sec:concl}
In this work we stated and solved a problem of consistent probability assignments for maximal projective measurements, that is, the number of projection operators in each measurement equals the dimension. This can be relaxed. For example, in the measurement bases in Section \ref{sec:dim3} it is immaterial for the basis $\cali{B}_i,\; i>0$ whether we take the complete basis or three projectors---two orthogonal projectors on a ``plane'' spanned by two vectors from $\cali{B}_0$ and one projector orthogonal to the plane. The problem of consistency number $r_n$ arises in the firs place because we have no a priori knowledge about the independence of the projection operators corresponding to the bases. We can visualize such a situation when different observes have no initial communication and can perform only local measurements. Further, it is not easy to define general procedures for constructing such independent bases. One notable exception is the explicit recipe for {\em mutually unbiased bases} (MUB). MUBs were first introduced by Schwinger \cite{Schwinger} in low dimension and later extended to prime and prime power dimensions by Ivanovic \cite{Ivanovic} and Wootters and Fields \cite{Wootters} respectively. The problem is this recipe does not work when the dimension has two distinct prime factors. 

The next logical question would be to consider 
the problem of consistent probability assignment for more general measurement schemes, in particular, for local measurements of entangled states. The most general problem regarding consistency number would be the following. What is the consistency number of a collection of measurements (positive operator valued in general) given some prior information about the state? In this format we have to satisfy some extra constraints. For example, we may have information that the unknown state is {\em pure}. The constraints would be nonlinear in general. More interesting protocols would result from these investigations. We hope to investigate these questions in future.


\begin{thebibliography}{20}

\bibitem{MP1}
van~der~Meyden R
and Patra M 2003
\newblock In {\em Proc. Computer Science Logic and 8th Kurt G$\ddot{\text{o}}$del Colloquium},
Vienna, 2003. Springer-Verlag 

\bibitem{Bell}
Bell J~S 1966
\newblock {\em Rev. Mod. Phys.} {\bf 38} 447 

\bibitem{KS}
Kochen S and Specker E~P 1967
\newblock{\em Jour. Math. and Mechanics} {\bf 17} 59

\bibitem{Stewart}
Stewart G~W 1998
\newblock{\em Matrix Algorithms Vol. 1}, SIAM, Philadelphia p.188

\bibitem{Kelley}
Kelley J~L 1975
\newblock{\em General Topology}, Springer p.33
\bibitem{Gleason}
Gleason A~M 1957
\newblock{\em Jour. Math. and Mechanics} {\bf 6} 885
\bibitem{Shamir}
Shamir A 1979
\newblock{\em Comm. ACM} {\bf 22} 612
\bibitem{Hillery}
Hillery M, Buzek V and Berthiaume A 1999
\newblock{\em Phys. Rev. A} {\bf 59} 1829
\bibitem{NC}
Nielsen M~A and Chuang I~L 2000
\newblock{\em Quantum computation and information}, Cambridge
\bibitem{Schwinger}
Schwinger J 1960
\newblock Proc. Natl. Acad. Sci. USA {\bf 46} 570-579
\newblock 
\bibitem{Ivanovic}
Ivanovic I D 1981
\newblock J. Phys. A {\bf 14}, 3241
\bibitem{Wootters}
Wootters W K and Fields B D 1989
\newblock Annals of Physics {\bf 191}, 363


\end{thebibliography}
\end{document}